\documentclass [winedt,yap]{iitparc}
\usepackage{cite}
\usepackage{amsmath,amssymb,amsfonts,bm}
\usepackage{eufrak}
\usepackage{lscape}
\usepackage{floatfig,wrapfig,epsfig}
\usepackage{subfigure}
\usepackage{color}
\usepackage{psboxit}
\usepackage{rotating}
\usepackage{curves}
\usepackage{ulem}
\usepackage[mathscr]{eucal}
\RequirePackage{psfrag}
\RequirePackage{graphicx}

\begin{document}\normalem
\initfloatingfigs
\frontmatter          % for the preliminaries

\IssuePrice{25.00}%
\TransYearOfIssue{2011}%
\TransCopyrightYear{2011}%
\OrigYearOfIssue{2011}%
\OrigCopyrightYear{2011}%

\TransVolumeNo{72}%
\TransIssueNo{6}%
\OrigIssueNo{20}%

%{\renewcommand{\baselinestretch}{.95}\tableofcontents} %TEXT ONLY!

\mainmatter

\setcounter{page}{345}
\CRubrika{CONTROL SCIENCES} \Rubrika{CONTROL SCIENCES}
\OrigJournalName{Problemy Upravleniya} \OrigIssueNo{1}
\OrigYearOfIssue{2010}
\OrigCopyrightYear{2010}%
\TransYearOfIssue{2011}%
\TransCopyrightYear{2011}%
\TransIssueNo{6} %?

\newlength{\widebarargwidth}
\newlength{\widebarwidth}
\newlength{\widebarargheight}
\newlength{\widebarargdepth}
\DeclareRobustCommand{\widebar}[1]{%
  \settowidth{\widebarargwidth}{\ensuremath{#1}}%
  \settoheight{\widebarargheight}{\ensuremath{#1}}%
  \settodepth{\widebarargdepth}{\ensuremath{#1}}%
  \addtolength{\widebarargwidth}{-0.2\widebarargheight}%
  \addtolength{\widebarargwidth}{-0.2\widebarargdepth}%
  \makebox[0pt][l]{\addtolength{\widebarargheight}{0.3ex}%
    \hspace{0.2\widebarargheight}%
    \hspace{0.2\widebarargdepth}%
    \hspace{0.5\widebarargwidth}%
    \setlength{\widebarwidth}{0.6\widebarargwidth}%
    \addtolength{\widebarwidth}{0.3ex}%
    \makebox[0pt][c]{\rule[\widebarargheight]{\widebarwidth}{0.1ex}}}%
  {#1}}

\title{Voting in a Stochastic Environment: The Case of Two Groups}

\author{P. Yu. Chebotarev,  A. K. Loginov, Ya. Yu. Tsodikova,\\
Z. M. Lezina,  and V. I. Borzenko}

\institute{Trapeznikov Institute of Control Sciences, Russian Academy of Sciences, Moscow, Russia}

\titlerunning{VOTING IN A STOCHASTIC ENVIRONMENT}

\authorrunning{Chebotarev, Loginov, Tsodikova,  Lezina, and Borzenko}

\OrigCopyrightedAuthors{P.Yu. Chebotarev,  A.K. Loginov, Ya.Yu. Tsodikova, Z.M. Lezina,  and V.I. Borzenko}

\received{Received April 6, 2009}

\OrigPages{pp.~18--25}

\maketitle

\begin{abstract}
Social dynamics determined by voting in a stochastic environment is analyzed for a society composed of two cohesive groups of similar size. Within the model of random walks determined by voting, explicit formulas are derived for the capital increments of the groups against the parameters of the environment and ``claim thresholds'' of the groups. The ``unanimous acceptance'' and ``unanimous rejection'' group rules are considered as the voting procedures. Claim thresholds are evaluated that are most beneficial to the participants of the groups and to the society as a whole.
\end{abstract}

\section{INTRODUCTION}

This work was carried out within the framework of analyzing social dynamics determined by democratic decisions in a stochastic environment. The subject of investigation is the relationship between the parameters of social dynamics and social strategies of the participants including egoism, collectivism (corporatism) and altruism. The results for the situation where a cohesive group competes with egoistic participants are reported in \cite{ch1,ch2}; competition between egoists and two cohesive groups (parties) was studied in~\cite{ch3}. The latter paper also considers the so-called ``snowball of cooperation'' mechanism which, under certain conditions, motivates those participants initially having egoistic strategies to act altruistically.

The present paper investigates a special case of the situation considered in \cite{ch3}, namely, competition of two groups. The relative simplicity of this case enables us to obtain explicit formulas for the most interesting relationships.

Let us recall the basic features of our model of voting in a stochastic environment. Being composed of a finite number of participants, the society sequentially votes for/against proposals that are regularly generated by the ``environment'' according to a random law. A proposal is identified with a vector of capital increments of the participants; an alternative interpretation involves the utility values of the participants. According to the model, the capital/utility increments entering the proposal are realizations of independent and identically distributed random variables. Here, the normal distribution is mainly considered. During the vote, an egoistic participant supports any proposal that increases his/her capital. In contrast to egoists, the members of a group vote jointly for those proposals that would be favorable for the whole group (the latter is characterized by a certain index). In particular, a group may support proposals that increase either the capital of the majority of its members or the total group capital. Each proposal is accepted (and implemented) or rejected under a particular voting procedure. Generally, the procedures of $\alpha $-majority are used. The parameter $\alpha\in[0,1)$ determines the share of votes necessary and sufficient for the acceptance of a proposal.

The presence of a stochastic environment is the focus of the model under consideration; it preserves many basic phenomena of social reality. At the same time, it enables the investigation of these phenomena by analytical methods.

In voting theory, the stochasticity assumption is often applied to choosing the voters' positions (see, e.g.,~\cite{ch4}). In cooperative game theory, it is widely adopted with respect to the payments~\cite{ch5}. A distinctive feature of our model is a stochastic mechanism for generating proposals for voting. In other words, we study a model of random walks controlled by voting. In this context, some papers on the dynamical correction of tax rates are relevant (see, e.g.,~\cite{ch6}); note, however, that these works focus on selecting the rate itself (which is optimized within special models of production and consumption) and not on the random walks in the space of individual utilities. We also mention a recent survey on modelling political competition processes that involve voting~\cite{ch7}.

\section{EXPECTED CAPITAL OF GROUP MEMBERS}

Suppose that a society consists of two groups  of similar size; every group votes jointly. A~natural voting rule requires that the accepted proposals are supported by both groups. In the sequel, we use the term \textit{unanimous acceptance group rule\/} for this voting procedure. Suppose that the first (second) group supports a proposal if and only if this proposal leads to the increase in the mean capital of the group members by at least $t_1$ (respectively, $t_2$). Here, the \emph{proposal support thresholds\/} $t_1$ and $t_2$ are variable parameters which may take positive, zero, or negative values. It is interesting to investigate the dependence of the future capital of the groups and the whole society on the proposal support thresholds $t_1$ and~$t_2$ (we will also call them \emph{claim thresholds}). In addition to the unanimous acceptance group rule, we will consider the rule according to which a proposal is accepted if and only if it is supported by at least one group. This voting procedure will be called the \textit{unanimous rejection group rule\/}, since a proposal is rejected iff it is rejected by both groups. In the following theorem, we derive expressions for the average capital increments under the above voting rules.

\begin{theorem}
Suppose a proposal is accepted if and only if it is supported by at least one group. Then the mathematical expectation of the one-step capital increment\/\footnote{Notation with tilde serves to indicate actual (i.e, taking into account the acceptance of proposals) capital increments, as distinct from the proposed ones.} of a member of group $i$ $(i=1,2)$ is given by the formula
\begin{gather}
\label{eq1}
M(\tilde d_i)
= \mu{\kern 1pt}F_{3-i}
+(\mu{\kern 1pt}F_i+\sigma_i f_i)\widebar F_{3-i}.
\end{gather}
Here$,$ $F_i=F(\mu_i/\sigma_i),\,\widebar F_i=1-F_i,\,f_i=f(\mu_i/\sigma_i),\,f(\cdot)$ and $F(\cdot)$ denote the density and distribution function of the standard normal distribution$,$ respectively$,$\footnote{%
$f(t)=\frac{1}{\sqrt{2\pi}}                    e^{-t^2\!/2}$ and
$F(t)=\frac{1}{\sqrt{2\pi}}\int_{-\infty}^{\,t}e^{-x^2\!/2}dx.$} $\mu_i=\mu-t_i,$ ${\sigma}_i=\sigma\!\!\Bigm/\!\!\!{\sqrt{g^{}_i}},$
$t_i$ is the claim threshold of group $i,$ $g^{}_i$ is the size of group $i,$ while $\mu$ and $\sigma$ are the parameters of the normal distribution $N(\mu,\,\sigma^2)$ that describes the independent capital increments forming proposals.

Suppose a proposal is accepted if and only if it is supported by both groups. Then the mathematical expectation of the one-step capital increment
of a member of group $i$ $(i=1,2)$ is given by the formula
\begin{gather}
\label{eq2}
M(\tilde d_i)
=(\mu{\kern 1pt}F_i+ \sigma_i f_i)F_{3-i} .
\end{gather}
\end{theorem}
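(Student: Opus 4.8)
The plan is to pass from the increment of a single member to the mean proposed increment of that member's whole group, exploit the independence of the two groups, and finish with the standard truncated first moment of a Gaussian.

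First I would let $\bar d_i$ denote the average of the $g_i$ proposed increments inside group~$i$. Since these are i.i.d.\ $N(\mu,\sigma^2)$, we have $\bar d_i\sim N(\mu,\sigma_i^2)$ with $\sigma_i=\sigma/\sqrt{g_i}$; group~$i$ supports the proposal precisely when $Y_i:=\bar d_i-t_i\ge0$, and $Y_i\sim N(\mu_i,\sigma_i^2)$, so $P(Y_i\ge0)=1-F(-\mu_i/\sigma_i)=F(\mu_i/\sigma_i)=F_i$; moreover $Y_1$ and $Y_2$ are independent, being functions of disjoint sets of participants. By within-group exchangeability all members of group~$i$ have the same expected actual increment, hence it equals the group mean of the actual increments, which is $\bar d_i$ on the acceptance event $A$ and $0$ off it. Thus in both cases $M(\tilde d_i)=M(\bar d_i\,\mathbf 1\{A\})$.

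Next I would invoke the elementary identity: for $W\sim N(m,s^2)$ and real $a$, $M\bigl(W\,\mathbf 1\{W\ge a\}\bigr)=m\bigl(1-F((a-m)/s)\bigr)+s\,f((a-m)/s)$, which follows by writing $W=m+sZ$ with $Z$ standard normal and using $\int_c^\infty z\,f(z)\,dz=f(c)$. Applying it with $W=\bar d_i$, $m=\mu$, $s=\sigma_i$, $a=t_i$, and noting $(t_i-\mu)/\sigma_i=-\mu_i/\sigma_i$ together with $f(-x)=f(x)$ and $1-F(-x)=F(x)$, gives
\[
M\bigl(\bar d_i\,\mathbf 1\{\bar d_i\ge t_i\}\bigr)=\mu F_i+\sigma_i f_i .
\]

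It then remains to assemble the two formulas. For the unanimous rejection rule $A=\{Y_{3-i}\ge0\}\cup\{Y_i\ge0\}$, so $\mathbf 1\{A\}=\mathbf 1\{Y_{3-i}\ge0\}+\mathbf 1\{Y_{3-i}<0\}\,\mathbf 1\{Y_i\ge0\}$; taking expectations and using independence of the two groups together with $M(\bar d_i)=\mu$ gives $M(\tilde d_i)=\mu F_{3-i}+(\mu F_i+\sigma_i f_i)\,\widebar F_{3-i}$, i.e.\ \eqref{eq1}. For the unanimous acceptance rule $A=\{Y_i\ge0\}\cap\{Y_{3-i}\ge0\}$, and independence yields $M(\tilde d_i)=M\bigl(\bar d_i\,\mathbf 1\{Y_i\ge0\}\bigr)\,P(Y_{3-i}\ge0)=(\mu F_i+\sigma_i f_i)F_{3-i}$, i.e.\ \eqref{eq2}. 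The only genuinely delicate point I anticipate is the reduction $M(d_i\,\mathbf 1\{A\})=M(\bar d_i\,\mathbf 1\{A\})$ for a fixed member of group~$i$: it holds because $A$ is a function of $(\bar d_i,Y_{3-i})$ while $M(d_i\mid\bar d_i,Y_{3-i})=M(d_i\mid\bar d_i)=\bar d_i$ by symmetry among the group's members. Everything else is the routine bookkeeping of the Gaussian truncated moment.
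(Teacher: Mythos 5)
Your proof is correct and follows essentially the same route as the paper: reduce the individual increment to the group average via exchangeability, partition the acceptance event by whether the other group supports, use independence across groups, and finish with the truncated Gaussian first moment. The only cosmetic difference is that you work with $M(\bar d_i\,\mathbf 1\{\bar d_i\ge t_i\})=\mu F_i+\sigma_i f_i$ directly, where the paper computes the conditional mean $\mu+\sigma_i f_i/F_i$ and multiplies back by the event probability; you also justify the exchangeability reduction slightly more explicitly than the paper does.
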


\begin{proof}
For a given proposal, let $G_1$ ($G_2$) be the event that the first (respectively, second) group supports the proposal. Denote by $G_1 G_2$ the simultaneous implementation of $G_1$ and $G_2.$ Similarly, $\,G_1 \widebar G_2$ and $\widebar G_1\widebar G_2$ are the simultaneous implementation of $G_1$ and the complement of $G_2,$ and of the complements of $G_1$ and $G_2,$ respectively. Let $P(\cdot)$ denote the event probability.

To prove the first statement of the theorem, we first suppose that the unanimous rejection group rule is used.
Using the total probability formula one has
\begin{gather}
\label{eq3}
\begin{split}
 M(\tilde d_i)
=M(\tilde d_i \mid G_{3-i})\,P(G_{3-i})
+M(\tilde d_i \mid G_i\widebar G_{3-i})\,P(G_i \widebar G_{3-i})
+M(\tilde d_i \mid \widebar G_i \widebar G_{3-i})\,P(\widebar G_i \widebar G_{3-i}).
\end{split}
\end{gather}

On the assumption of $\widebar G_i \widebar G_{3-i},$ the proposal is rejected; hence, $M(\tilde d_i \!\mid\! \widebar G_i \widebar G_{3-i})=0$. Under $G_{3-i}$, the proposal is accepted and independence of the proposal components implies that ${M(\tilde d_i \!\mid\! G_{3-i})=\mu}$.
Similarly, the proposal is accepted under $G_i \widebar G_{3-i}$, and independence of the components $d_i$ yields $M(\tilde d_i \!\mid\! G_i \widebar G_{3-i})=M(d_i \!\mid\! G_i \widebar G_{3-i})=M(d_i \!\mid\! G_i)$. Next, $M(d_i \!\mid\! G_i)=M(d_i^{\,\textrm{ave}} \!\mid\! G_i)$ with $d_i^{\,\textrm{ave}}$ standing for the result of averaging the components of the proposal (these components correspond to the $i$th group). Observe that $d_i^{\,\textrm{ave}}$ is a random variable having the distribution $N(\mu,\sigma_i^2)$, where ${\sigma}_i=\sigma\!\!\Bigm/\!\!\!{\sqrt{g^{}_i}},$ and that $G_i$ is tantamount to $d_i^{\,\textrm{ave}}>t_i.$ Using the formula for the conditional mean of a normal random variable (it can be proved by integration) one finally obtains
\begin{gather}
\label{eq4}
M(\tilde d_i \mid G_i \widebar G_{3-i})
=M(d_i^{\,\textrm{ave}} \mid d_i^{\,\textrm{ave}} >t_i)
={\mu}+\frac{\sigma_i f_i}{F_i},
\end{gather}
where $F_i=F(\mu_i/\sigma_i)$, $f_i=f(\mu_i/\sigma_i)$, and $\mu_i=\mu-t_i.$ It is also easy to demonstrate that ${P(G_{3-i})=F_{3-i}}$. Using independence of the proposal components we have the expression
\begin{gather}
\label{eq5}
\,P(G_i \widebar G_{3-i})=\,P(G_i)\,P(\widebar G_{3-i})=F_i \widebar F_{3-i} .
\end{gather}
Substituting all this into\,(\ref{eq3}) yields
\begin{gather*}
M(\tilde d_i)
=\mu{\kern 1pt}F_{3-i}+\left(\mu+\frac{\sigma_i f_i}{F_i}\right)F_{i{\kern 1pt}}\widebar F_{3-i}
=\mu{\kern 1pt}F_{3-i}+(\mu{\kern 1pt}F_{i{\kern 1pt}}+\sigma_i f_i){\kern 1pt} \widebar F_{3-i}.
\end{gather*}
This completes the proof of the first part.

Now prove the second statement of the theorem. Suppose that the unanimous acceptance group rule is used. In this case,
\begin{gather}
\label{eq6}
M(\tilde d_i)=M(\tilde d_i \mid G_i G_{3-i})\,P(G_i G_{3-i}).
\end{gather}

Similarly to the derivation of (\ref{eq4}), one obtains
\begin{gather}
\label{eq7}
M(\tilde d_i\mid G_i G_{3-i})
=M(d_i^{\,\textrm{ave}}\mid d_i^{\,\textrm{ave}}>t_i)=\mu+\frac{\sigma_i f_i}{F_i}
\end{gather}
and similarly to (\ref{eq5}),
\begin{gather}
\label{eq8}
\,P(G_i G_{3-i})=\,P(G_i)\,P(G_{3-i})=F_i F_{3-i} .
\end{gather}

Substitute (\ref{eq7})--(\ref{eq8}) in (\ref{eq6}) to get $M(\tilde d_i)=(\mu{\kern 1pt}F_i+\sigma_i f_i)\,F_{3-i}.$ This completes the proof.~$\diamondsuit$
\end{proof}

Of special interest are the \textit{comparative\/} capital increments of the groups. Indeed, it is important to know the capital-related performance of each group against the background of the results achieved by the other one. The expected one-step capital increment of Group~1 as compared with the one of Group~2 is expressed as $M(\tilde d_1-\tilde d_2)$. Theorem~1 enables one to obtain simple formulas for $M(\tilde d_1-\tilde d_2)$.

\begin{corollary}
In the notation of Theorem~$1,$ if the unanimous rejection group rule is adopted$,$ then
\begin{gather}
\label{eq9}
M(\tilde d_1-\tilde d_2)
=\sigma_1 {\kern 1pt}f_1 {\kern 1pt}\widebar F_2-\sigma_2{\kern 1pt}f_2{\kern 1pt}\widebar F_1 .
\end{gather}

Under the unanimous acceptance group rule$,$ one has
\begin{gather}
\label{eq10}
M(\tilde d_1-\tilde d_2)
=\sigma_1 {\kern 1pt}f_1{\kern 1pt}F_2
-\sigma_2 {\kern 1pt}f_2{\kern 1pt}F_1.
\end{gather}
\end{corollary}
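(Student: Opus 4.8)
The plan is to derive the Corollary directly from Theorem~1 by a straightforward subtraction, exploiting the symmetry of the two formulas under the swap $i \leftrightarrow 3-i$. First I would write out $M(\tilde d_1)$ and $M(\tilde d_2)$ explicitly from \eqref{eq1} in the unanimous rejection case: namely $M(\tilde d_1) = \mu F_2 + (\mu F_1 + \sigma_1 f_1)\widebar F_2$ and $M(\tilde d_2) = \mu F_1 + (\mu F_2 + \sigma_2 f_2)\widebar F_1$. Then I would form the difference and group the terms involving $\mu$ separately from those involving $\sigma_i f_i$.

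The key observation is that the $\mu$-terms cancel. Expanding, the $\mu$-part of $M(\tilde d_1)$ is $\mu F_2 + \mu F_1 \widebar F_2 = \mu F_2 + \mu F_1 (1 - F_2) = \mu(F_1 + F_2 - F_1 F_2)$, which is symmetric in the indices $1$ and $2$; hence the $\mu$-part of $M(\tilde d_2)$ is the same expression, and the two cancel in $M(\tilde d_1 - \tilde d_2)$. What remains is $\sigma_1 f_1 \widebar F_2 - \sigma_2 f_2 \widebar F_1$, which is exactly \eqref{eq9}. For the unanimous acceptance case I would do the same with \eqref{eq2}: $M(\tilde d_1) = (\mu F_1 + \sigma_1 f_1)F_2$ and $M(\tilde d_2) = (\mu F_2 + \sigma_2 f_2)F_1$; the $\mu$-terms are $\mu F_1 F_2$ in both, so again they cancel, leaving $\sigma_1 f_1 F_2 - \sigma_2 f_2 F_1$, which is \eqref{eq10}.

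There is essentially no obstacle here: the proof is a one-line algebraic consequence of Theorem~1, and the only thing worth checking carefully is that the $\mu$-coefficient is genuinely symmetric in the two group indices (so that it drops out), which the identity $F_{3-i} + F_i \widebar F_{3-i} = F_i + F_{3-i} - F_i F_{3-i}$ makes transparent in the rejection case and which is immediate in the acceptance case. One could also remark that this cancellation has a clean probabilistic meaning — conditional on a proposal being accepted, the "baseline" expected increment $\mu$ on the components not constrained by one's own group is common to both groups — but for the proof itself the algebra suffices.
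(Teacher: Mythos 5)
Your proof is correct and is exactly the computation the paper has in mind: the paper simply states that the proof of Corollary~1 is straightforward, and the straightforward route is the subtraction of the two instances of \eqref{eq1} (resp.\ \eqref{eq2}) with the $\mu$-terms cancelling by the symmetry $F_{3-i}+F_i\widebar F_{3-i}=F_i+F_{3-i}-F_iF_{3-i}$, just as you wrote. Nothing to add.
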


The proof of Corollary~1 is straightforward.

\section{ANALYSIS OF SOCIAL DYNAMICS}

In order to apply and interpret the results obtained, let us first consider the case where the mean proposed capital increment is zero, i.e., the environment is neutral; suppose that the standard deviation of the proposed capital increment is~10: $\mu=0,\, \sigma=10$. Suppose that each group includes 300 participants and Group~1 supports a proposal whenever it increases the total capital of the group $(t_1=0)$. The claim threshold $t_2$ of Group~2 varies.

In the case of the unanimous acceptance group rule (sometimes, we will call it the $(G_1\wedge G_2)$-rule), the expected capital increments of the groups and the whole society versus the claim threshold $t_2$ of Group~2, as stated by Theorem~1, are depicted in Fig.\,1.

\begin{figure}[htbp]
\centerline{\includegraphics[scale=1.5]{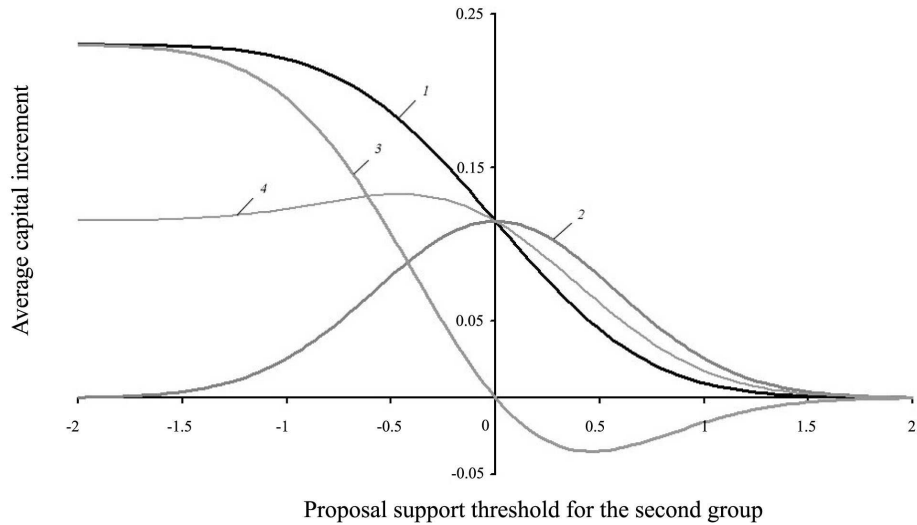}}
\label{fig1}
\caption{Expected capital increments of the groups and the society provided that the unanimous acceptance group rule is adopted (in this example, it coincides with the majority rule): {\sl1}---Group~1; {\sl2}---Group~2; {\sl3}---the difference between Group~1 and Group~2; {\sl4}---the whole society. The parameters are: $\mu=0;\, \sigma=10;$ ${g_1=g_2=300}$.}
\end{figure}

%Fig.\,1. Expected capital increments of the groups and the society provided that the unanimous acceptance group rule is adopted (in this example, it coincides with the majority rule): {\sl1}---Group~1; {\sl2}---Group~2; {\sl3}---the difference between Group~1 and Group~2; {\sl4}---the whole society. The parameters are: $\mu=0;\, \sigma=10;\, g_1=g_2=300$.
%
%average capital increment
%
%Proposal support threshold of Group~2

In particular, the average capital increment of Group~2 represents a symmetrical bell-shaped curve. This can be explained as follows. Setting a positive Group~2 threshold (high claims) leads to the rejection of some favorable to Group~2 (on the average) proposals, whereas setting a negative threshold (low claims) leads to the acceptance, along with the favorable to Group~2 proposals, of some proposals unfavorable to it. So both strategies decrease the expected capital increment of Group~2 as against the case of~$t_2=0.$ A~noteworthy fact is that positive and negative thresholds of the same absolute value are equivalent. Let us prove it.

\begin{corollary}
Let $\mu=0$. Then for fixed values of the model parameters except for~$t_2,$ any $t\in\bbbr,$ and under either group voting rule$,$ $M(\tilde d_2 \mid t_2=t)=M(\tilde d_2 \mid t_2=-t)$ holds.
\end{corollary}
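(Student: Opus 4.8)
The plan is to specialize the two formulas of Theorem~1 to the neutral environment $\mu=0$ and then exploit the fact that the standard normal density $f$ is an even function. First I would observe that $\mu=0$ forces $\mu_2=\mu-t_2=-t_2$, so every quantity attached to Group~2 in (\ref{eq1})--(\ref{eq2}) enters only through the argument $\mu_2/\sigma_2=-t_2/\sigma_2$; in particular $f_2=f(-t_2/\sigma_2)$ and $F_2=F(-t_2/\sigma_2)$. By contrast, $\sigma_2=\sigma/\sqrt{g_2}$ and all quantities attached to Group~1 (namely $F_1$ and $\widebar F_1$, which depend only on $\mu_1=-t_1$ and $\sigma_1$) involve solely parameters that are held fixed.

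Next I would treat the two rules separately. Under the unanimous rejection rule, (\ref{eq1}) with $\mu=0$ collapses to $M(\tilde d_2)=\sigma_2 f_2\,\widebar F_1$, since the two terms carrying the factor $\mu$ vanish. The only $t_2$-dependent factor is $f_2=f(-t_2/\sigma_2)$, and because $f$ is even, $f(-t/\sigma_2)=f(t/\sigma_2)$; hence $M(\tilde d_2\mid t_2=t)=\sigma_2 f(-t/\sigma_2)\widebar F_1=\sigma_2 f(t/\sigma_2)\widebar F_1=M(\tilde d_2\mid t_2=-t)$. Under the unanimous acceptance rule, (\ref{eq2}) with $\mu=0$ collapses to $M(\tilde d_2)=\sigma_2 f_2\,F_1$, and the identical argument applies, the only $t_2$-dependence again being carried by the even function $f(\mu_2/\sigma_2)$.

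I do not expect a genuine obstacle here: once Theorem~1 is available, the claim reduces to the observation that a neutral environment kills the $\mu F_i$ terms and leaves an expression proportional to $f(\mu_2/\sigma_2)$, which is manifestly invariant under $t_2\mapsto -t_2$. If one insisted on a proof bypassing Theorem~1, the only mildly delicate point would be to re-establish the conditional-mean identity $M(d_2^{\,\textrm{ave}}\mid d_2^{\,\textrm{ave}}>t_2)=\mu+\sigma_2 f_2/F_2$ and then verify the cancellation between the ``rejected favorable'' and ``accepted unfavorable'' contributions directly; but this is routine integration and is unnecessary given the theorem.
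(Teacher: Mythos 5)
Your proof is correct and follows the paper's own argument essentially verbatim: specialize formulas (\ref{eq1})--(\ref{eq2}) to $\mu=0$ so that $M(\tilde d_2)$ reduces to $\sigma_2 f_2\,\widebar F_1$ (resp.\ $\sigma_2 f_2\,F_1$), note that $\sigma_2$ and $F_1$ do not depend on $t_2$, and invoke the evenness of $f$ at $f_2=f(-t_2/\sigma_2)$. Nothing further is needed.
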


\begin{proof}
Applying Theorem~1 to the case $\mu=0$ one obtains
\begin{gather}
\label{eq11}
M(\tilde d_2)=\sigma_2 f_2 \,\widebar F_1
\end{gather}
for the unanimous rejection group rule and
\begin{gather}
\label{eq12}
M(\tilde d_2)=\sigma_2 f_2 \,F_1
\end{gather}
for the unanimous acceptance group rule.

The required statement follows from the facts that $\sigma_2$ and $F_1$ are independent of $t_2$ and that $f_2=f\left({\dfrac{\mu-t_2}{\sigma_2}}\right)$ is an even function of $t_2$ provided that $\mu=0$.~$\diamondsuit$
\end{proof}

\begin{remark}
Due to (\ref{eq11})--(\ref{eq12}), for both voting rules under consideration, the expression of $M(\tilde d_2)$ in terms of $t_2$ is proportional to the normal density function, which is a bell-shaped curve with zero limits as $t_2 \to-\infty$ and $t_2 \to \infty.$ Finally, if $t_1=0$, then under \textit{both voting rules\/}, $M(\tilde d_2)=\frac{1}{2}\sigma_2 f_2$ holds.~$\diamondsuit$
\end{remark}

Next we study the characteristics of social dynamics for each voting rule separately.

\subsection{The Unanimous Acceptance Group Rule}

If the unanimous acceptance group rule is adopted, then {Group~1} benefits from low (negative) claim threshold $t_2$ of Group~2. Indeed, in this case Group~2 rarely vetoes proposals beneficial to Group~1 and so Group~1 can maximize its benefits. Contrariwise, if $t_2$ is high, then Group~1 can rarely expect approval of proposals beneficial to it.

Now, suppose that Group~2 aims not at maximizing the average capital increment, but at ensuring maximum advantage over Group~1. In other words, Group~2 maximizes $M(\tilde d_2-\tilde d_1)$ rather than $M(\tilde d_2)$. A~tool Group~2 can use for this is its claim threshold~$t_2$. For instance, in the example illustrated by Fig.\,1, $M(\tilde d_2-\tilde d_1)$ as a function of $t_2$ attains its maximum when $M(\tilde d_1-\tilde d_2)$  attains its minimum, i.e., at $t_2\approx 0.46.$ In the proposition below, we find the ``claim threshold'' $t_2$ that maximizes $M(\tilde d_2-\tilde d_1)$ in the general case.

\begin{proposition}
Suppose the unanimous acceptance group rule is adopted. Then$,$ in the notation of Theorem~$1,$  the expected advantage $M(\tilde d_2-\tilde d_1)$ of a~Group~$2$ member over a~Group~$1$ member in the sense of capital increment attains its unique maximum at the claim threshold $t_2$ of Group~$2$ defined by
\begin{gather}
\label{eq13}
t_2^+=\mu+\frac{\sigma_1 f_1}{F_1}.
\end{gather}
\end{proposition}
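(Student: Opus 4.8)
The plan is to treat $M(\tilde d_2-\tilde d_1)$ as a function of the single variable $t_2$ (all other parameters being fixed) and locate its critical points by differentiation. By the Corollary, under the unanimous acceptance rule $M(\tilde d_2-\tilde d_1)=\sigma_2 f_2 F_1-\sigma_1 f_1 F_2$. Here $F_1,f_1,\sigma_1$ do not depend on $t_2$, so only the terms $\sigma_2 f_2 F_1$ and $\sigma_1 f_1 F_2$ need to be differentiated in $t_2$. Recall $\mu_2=\mu-t_2$ and $\sigma_2=\sigma/\sqrt{g_2}$ is constant in $t_2$, so $\frac{d}{dt_2}\bigl(\mu_2/\sigma_2\bigr)=-1/\sigma_2$; also $f(x)$ and $F(x)$ satisfy $F'=f$ and $f'(x)=-x\,f(x)$.

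First I would compute $\frac{d}{dt_2}F_2=\frac{d}{dt_2}F(\mu_2/\sigma_2)=f_2\cdot(-1/\sigma_2)$ and $\frac{d}{dt_2}f_2=f'(\mu_2/\sigma_2)\cdot(-1/\sigma_2)=-\frac{\mu_2}{\sigma_2}f_2\cdot(-1/\sigma_2)=\frac{\mu_2}{\sigma_2^2}f_2$. Substituting, $\frac{d}{dt_2}M(\tilde d_2-\tilde d_1)=\sigma_2 F_1\cdot\frac{\mu_2}{\sigma_2^2}f_2-\sigma_1 f_1\cdot\bigl(-\frac{1}{\sigma_2}f_2\bigr)=\frac{f_2}{\sigma_2}\bigl(\mu_2 F_1+\sigma_1 f_1\bigr)$. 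Since $f_2>0$ and $\sigma_2>0$ everywhere, the derivative vanishes exactly when $\mu_2 F_1+\sigma_1 f_1=0$, i.e.\ when $\mu-t_2=-\sigma_1 f_1/F_1$, which rearranges to $t_2^+=\mu+\sigma_1 f_1/F_1$, the claimed formula.

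It then remains to confirm this unique critical point is a global maximum rather than a minimum or inflection. The cleanest route is a sign analysis of $\frac{d}{dt_2}M(\tilde d_2-\tilde d_1)=\frac{f_2}{\sigma_2}\bigl(\mu_2 F_1+\sigma_1 f_1\bigr)$: the factor $\mu_2 F_1+\sigma_1 f_1=(\mu-t_2)F_1+\sigma_1 f_1$ is strictly decreasing in $t_2$ (its $t_2$-derivative is $-F_1<0$), so the derivative of $M(\tilde d_2-\tilde d_1)$ is positive for $t_2<t_2^+$ and negative for $t_2>t_2^+$; hence $t_2^+$ is the unique maximizer. One should also note $F_1>0$ strictly (as $F$ is a distribution function that is positive everywhere), so the division by $F_1$ is legitimate and $t_2^+$ is well defined and finite.

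The main obstacle here is essentially bookkeeping rather than conceptual: keeping the chain-rule factor $-1/\sigma_2$ and the identity $f'(x)=-xf(x)$ straight so that the two $t_2$-derivatives combine into the clean product form $\frac{f_2}{\sigma_2}(\mu_2 F_1+\sigma_1 f_1)$. Once that factorization is in hand, the monotonicity of the bracketed factor immediately gives both existence and uniqueness of the maximum, so no second-derivative computation or appeal to limiting behavior at $t_2\to\pm\infty$ is strictly needed.
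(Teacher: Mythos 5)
Your proof is correct and takes essentially the same route as the paper: differentiate the closed-form expression for $M(\tilde d_2-\tilde d_1)$ from Corollary~1 with respect to $t_2$ and solve for the unique critical point, arriving at $t_2^+=\mu+\sigma_1 f_1/F_1$. The only difference is that you certify the maximum by a sign analysis of the factored derivative $\frac{f_2}{\sigma_2}\bigl(\mu_2 F_1+\sigma_1 f_1\bigr)$, whose bracketed factor is strictly decreasing in $t_2$, whereas the paper invokes the second-derivative test; your variant is if anything slightly cleaner, since it delivers global uniqueness of the maximizer directly.
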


\begin{proof}
Differentiating (10) with respect to $t_2$ we obtain
\begin{gather*}
\frac{d{\kern 1pt}M(\tilde d_1-\tilde d_2)}{d\,t_2}
=\frac{d\,(\sigma_1 {\kern 1pt}f_1 {\kern 1pt}F_2-\sigma_2 {\kern 1pt}f_2 {\kern 1pt}F_1)}{d\,t_2}
=\sigma_1 {\kern 1pt}f_1 f_2\!\cdot\!\bigl(-\sigma_2^{-1}\bigr)-\sigma_2 f_2 \frac{-(t_2-\mu)}{\sigma_2^2}F_1 .
\end{gather*}

Setting the derivative equal to zero results in the desired expression~(\ref{eq13}). Finding the second derivative of the function $M(\tilde{d}_1-\tilde d_2)$ ensures that a minimum has been found. Hence, (\ref{eq13}) provides the unique maximum of $M(\tilde d_2-\tilde d_1)$.~$\diamondsuit$
\end{proof}

\begin{remark}
It should be noted that the optimal claim threshold $t_2^+$ found in Proposition~1 is independent of the group size $g_2.$ It is also interesting that expression (\ref{eq13}) coincides with (\ref{eq4}) in which $i=1$. In turn, (\ref{eq4}) with $i=1$ expresses the expected capital increment of a member of Group~1 provided it exceeds the threshold $t_1$. This observation suggests the following simple algorithm of estimating the optimal ``claim threshold'' $t_2^+$ by Group~2: (1) find the average capital increments of Group~1 members according to the proposals that Group~1 supports; (2)~set the threshold $t_2$ equal to the value found on Step~(1) averaged over the proposals.

Thus, to achieve an advantage over Group~1, Group~2 should set its claim threshold at a higher level than Group~1 does. As has been noted before, the optimal threshold $t_2$ is equal to the mean ``above threshold'' value of Group~1 capital increment. In particular, for the example of Fig.\,1, (\ref{eq13}) gives $t_2^+=\sqrt{\frac{2}{3\pi}}.$ Now, if Group~1 wishes to act in the same way (i.e., to maximize its advantage $M(\tilde d_1-\tilde d_2)$ with respect to $t_2=t_2^+$), it should, in turn, set a higher threshold: $t_1>t_2^+.$ If this process continues, it leads to an infinite increase in the claims (the process is divergent and the corresponding game possesses no Nash equilibrium).~$\diamondsuit$
\end{remark}

How does the maximum comparative gain $M(\tilde d_2-\tilde d_1)$ of Group~2 depend on the fixed claim threshold $t_1$ of Group~1? It increases as $t_1 \to-\infty$, since in this case Group~1 supports all proposals, and thus, the decisions are made by Group~2. If $t_1$ grows, then this maximum gain decreases and tends to zero as $t_1\to\infty$ since in this case, no proposals are accepted due to the veto of Group~1.

Now consider the following question which seems quite interesting. What claim threshold $t_2$ of Group~2 is optimal for the whole society (thus, leading to the maximum expected capital increment of the society)? The answer is provided by the following proposition.

\begin{proposition}
Suppose the unanimous acceptance group rule is adopted. Then the expected capital increment of the whole society attains its maximum value at the claim threshold $t_2$ of Group~$2$ given by
\begin{gather}
\label{eq14}
t_2^0
=-\frac{g_1}{g_2}\left({\mu+\frac{\sigma_1 f_1}{F_1}}\right).
\end{gather}
\end{proposition}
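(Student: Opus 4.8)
The plan is to treat the expected one-step increment of the total capital of the society, $M(\tilde D)=g_1 M(\tilde d_1)+g_2 M(\tilde d_2)$ (the per-capita version differs only by the positive factor $1/(g_1+g_2)$ and has the same maximizer), as a function of $t_2$ and to optimize it by elementary calculus. Since the claim threshold $t_1$ of Group~1 is held fixed, the quantities $F_1,\,f_1,\,\sigma_1$ are constants, and only $F_2=F(\mu_2/\sigma_2)$ and $f_2=f(\mu_2/\sigma_2)$ depend on $t_2$ through $\mu_2=\mu-t_2$. Substituting (\ref{eq2}) and grouping terms gives $M(\tilde D)=g_1(\mu F_1+\sigma_1 f_1)\,F_2+g_2 F_1(\mu F_2+\sigma_2 f_2)$.

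Next I would differentiate with respect to $t_2$. Using $d\mu_2/dt_2=-1$, the identity $f'(x)=-x f(x)$ for the standard normal density, and $F'(x)=f(x)$, one has $dF_2/dt_2=-f_2/\sigma_2$ and $df_2/dt_2=(\mu_2/\sigma_2^2)f_2$. The key simplification is that the terms carrying $\mu F_2$ and $\sigma_2 f_2$ combine through $\mu_2-\mu=-t_2$, so that after factoring out $f_2/\sigma_2$ one obtains
\begin{gather*}
\frac{d{\kern 1pt}M(\tilde D)}{d\,t_2}
=\frac{f_2}{\sigma_2}\Bigl(-g_1(\mu F_1+\sigma_1 f_1)-g_2 F_1\,t_2\Bigr).
\end{gather*}

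Setting the right-hand side to zero and solving the resulting linear equation for $t_2$ (recall $F_1>0$ and $g_1,g_2>0$) yields precisely $t_2^0=-\frac{g_1}{g_2}\bigl(\mu+\frac{\sigma_1 f_1}{F_1}\bigr)$ of (\ref{eq14}). To see that this critical point is the unique global maximum, I would argue directly from the displayed derivative: the factor $f_2/\sigma_2$ is strictly positive for every $t_2$, while the bracketed factor is affine in $t_2$ with strictly negative leading coefficient $-g_2 F_1$, hence positive for $t_2<t_2^0$ and negative for $t_2>t_2^0$. Thus $M(\tilde D)$ is strictly increasing on $(-\infty,t_2^0)$ and strictly decreasing on $(t_2^0,\infty)$, so $t_2^0$ is its unique maximizer; this reasoning is moreover insensitive to the sign of $\mu F_1+\sigma_1 f_1$.

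I do not anticipate a real obstacle: the only care points are the bookkeeping of the chain-rule factors for $F_2$ and $f_2$ and spotting the cancellation $\mu_2-\mu=-t_2$ that keeps the derivative linear in $t_2$. As an alternative to the sign-change argument one could evaluate the second derivative at $t_2^0$, in the spirit of the proof of Proposition~1, but the monotonicity argument above is shorter and delivers global optimality at once.
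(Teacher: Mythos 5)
Your proposal is correct and follows essentially the same route as the paper: substitute (\ref{eq2}) into $g_1 M(\tilde d_1)+g_2 M(\tilde d_2)$, differentiate with respect to $t_2$, and solve the resulting linear equation, which indeed yields (\ref{eq14}). The only (welcome) refinement is that by factoring the derivative as $\frac{f_2}{\sigma_2}\bigl(-g_1(\mu F_1+\sigma_1 f_1)-g_2F_1 t_2\bigr)$ you get a sign-change argument establishing the \emph{global} uniqueness of the maximum directly, whereas the paper settles for the second-derivative test.
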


\begin{proof}
Theorem~1 (see (\ref{eq2})) implies that the expected capital increment of the whole society is
\begin{eqnarray*}
g_1 M(\tilde d_1)+g_2 M(\tilde d_2)
&=&g_1 (\mu{\kern 1pt}F_1+\sigma_1 f_1)F_2+g_2(\mu{\kern 1pt}F_2+\sigma_2 f_2)F_1 \\
&=&\left({(g_1+g_2)\mu{\kern 1pt}F_1+g_1 \sigma_1 f_1}\right)F_2+g_2 \sigma_2 f_2 F_1.
\end{eqnarray*}

The derivative of this function with respect to $t_2$ is
\begin{gather*}
\frac{d\,(g_1 M(\tilde d_1)+g_2 M(\tilde d_2))}{d\,t_2}
=((g_1+g_2)\mu{\kern1pt}F_1+g_1\sigma_1f_1)f_2\!\cdot\!\bigl(-\sigma_2^{-1}\bigr)+g_2\sigma_2{\kern1pt}f_2F_1\frac{-(t_2-\mu)}{\sigma_2^2}.
\end{gather*}

Setting it equal to zero one obtains $g_2 {\kern 1pt}F_1 {\kern 1pt}(t_2^0-\mu)=-(g_1+g_2)\mu{\kern 1pt}F_1-g_1\sigma_1 f_1$, which leads to~(\ref{eq14}). Finally, the second derivative test confirms that this point provides a unique maximum value.~$\diamondsuit$
\end{proof}

\begin{remark}
\label{r_3}
Comparing Propositions~1 and~2 produces a somewhat unexpected result. Namely, at $g_1=g_2$ those values of claim threshold $t_2$ of Group~2 leading to the maximum advantage over Group~1 (on the one hand) and to the best results for the whole society (on the other hand) turn out to be \textit{opposite.} In particular, this is the case for the example illustrated by Fig.\,1. Note that even if $g_1\ne g_2$, the thresholds $t_2^+$ and $t_2^0$ always have different signs (or both are zero), since by (\ref{eq13}) and (\ref{eq14}), %$t_2^0=-(g_1 /g_2)t_2^+.$~$\diamondsuit$
\begin{gather*}
t_2^0=-\frac{g_1}{g_2}\,t_2^+.\quad\diamondsuit
\end{gather*}
\end{remark}

Proposition~2 determines $t_2$ (as a function of $t_1$) that guarantees the maximum capital increment of the whole society. Let us maximize this maximum over $t_1$ to find the global optimal point of the society. Obviously, the corresponding solutions $t_1$ and $t_2$ satisfy the system of equations
\begin{gather}
\label{eq15}
t_1=-\frac{g_2}{g_1}\left({\mu+\frac{\sigma_2 f_2}{F_2}}\right); \quad
t_2=-\frac{g_1}{g_2}\left({\mu+\frac{\sigma_1 f_1}{F_1}}\right)
\end{gather}
with $\sigma_1 ,\sigma_2 ,f_1 ,f_2 ,F_1,$ and $F_2$ defined in Theorem~1. Most likely, the solution of (\ref{eq15}) cannot generally be expressed in elementary functions. In particular, this applies even to the simplest case of $g_1=g_2$ (two groups of equal size) in which the system (\ref{eq15}) reduces to
\begin{gather*}
t_1=-\mu-\frac{\sigma_1 f_1}{F_1}; \quad t_2=t_1.
\end{gather*}

Nevertheless, the solution can be obtained numerically. For instance, the dependence on $\mu$ of the claim threshold $t=t_1=t_2$ optimal for the society is shown (in the case of $g_1=g_2=300$ and $\sigma=10$) in Fig.\,2.

\begin{figure}[htbp]
\centerline{\includegraphics[scale=1.5]{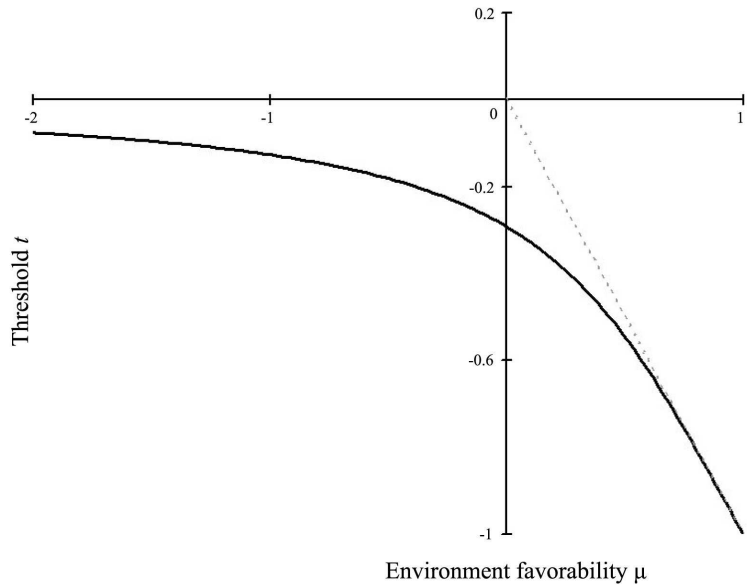}}
\label{fig2}
\caption{Relationship between $\mu$ and the claim threshold $t=t_{1}=t_{2}$ that maximizes the capital increment of the whole society provided that $g_{1}=g_{2}=300$ and $\sigma=10$.}
\end{figure}

%Fig.\,2. Relationship between $\mu$ and the claim threshold $t=t_{1}=t_{2}$ that maximizes the capital increment of the whole society provided that $g_{1}=g_{2}=300$ and $\sigma=10$.
%
%Threshold $t$
%
%Environment favorability $\mu$

In particular, for $g_1=g_2=\tilde g$ and $\mu=0,$ the optimal threshold is $t=-\sigma y_0\big/\sqrt{\tilde g}$, where $y_0\approx 0.506$ is the unique solution of the equation $y=f(y)/F(y)$ with $f(y)$ and $F(y)$ designating the density and distribution function of the standard normal distribution.

One of the most instructive results of this study is that under the unanimous acceptance group rule, the claim threshold $t$ of the groups that is optimal for the whole society is negative (see Fig.\,2). In the case $g_1=g_2$, it tends to zero from below as $\mu\to-\infty$ and asymptotically approaches to the line $t=-\mu$ as $\mu\to\infty$. Negative claim thresholds of the groups can be interpreted as a willingness to a moderately negative result, that is, as a certain appetite for risk.

\subsection{The Unanimous Rejection Group Rule}

Now consider the unanimous rejection group rule (we will also call it the $(G_1\vee G_2)$-rule). With this procedure, for acceptance of a proposal, the support of either one group is sufficient. The analytical results obtained in Section~2 are illustrated by the diagram in Fig.\,3.

\begin{figure}[htbp]
\centerline{\includegraphics[scale=1.5]{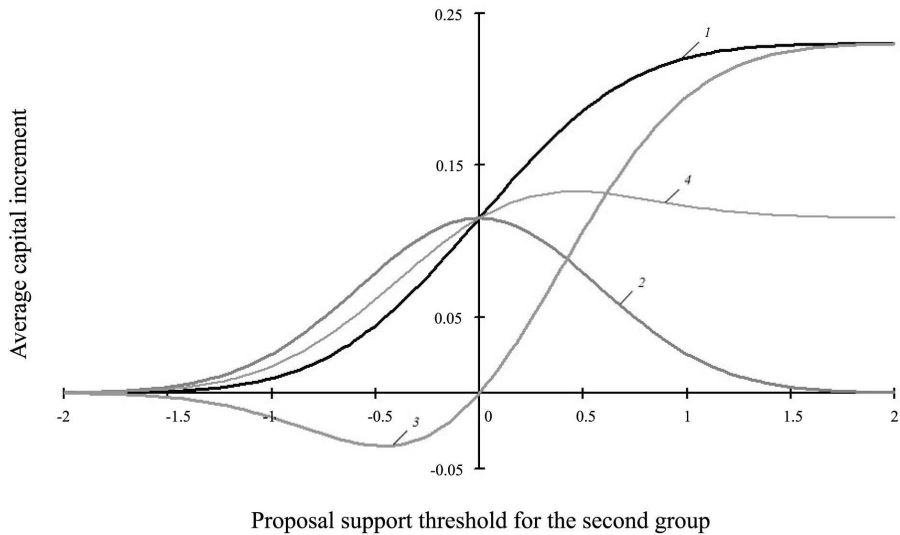}}
\label{fig3}
\caption{Expected capital increments of the members of Groups~1 and~2 and the society under the unanimous rejection group rule: {\sl1}---Group~1; {\sl2}---Group~2; {\sl3}---the difference between Group~1 and Group~2; {\sl4}---the whole society; the parameters are: $\mu=0;\, \sigma=10;$ $g_{1}=g_{2}=300$.}
\end{figure}

%Fig.\,3. Expected capital increments of the members of Groups~1 and~2 and the society under the unanimous rejection group rule: {\sl1}---Group~1; {\sl2}---Group~2; {\sl3}---the difference between Group~1 and Group~2; {\sl4}---the whole society; the parameters are: $\mu=0;\, \sigma=10;$ $g_{1}=g_{2}=300$.
%
%the average capital increment
%
%Proposal support threshold of Group~2

It can be easily observed that this diagram mirrors the one shown in Fig.\,1. Indeed, the following statement holds.

\begin{corollary}
Let $\mu=0$. If the model parameters$,$ except for $t_2,$ are fixed$,$ then for any $t\in\bbbr,$ $M_{G_1\vee G_2}(\tilde d_1\mid t_2=t)=M_{G_1\wedge G_2}(\tilde d_1\mid t_2=-t)$ holds$,$ where $M_{G_1 \wedge G_2}$ $(M_{G_1\vee G_2})$ is the expectation$,$ provided the unanimous acceptance group rule $($resp.$,$ the unanimous rejection group rule$)$ is adopted. Moreover$,$ at $t_1=0$ one has $M_{G_1 \vee G_2}(\tilde d_2\mid t_2=t)=M_{G_1 \wedge G_2}(\tilde d_2\mid t_2=t)$ for any $t\in\bbbr.$
\end{corollary}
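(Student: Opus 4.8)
The plan is to derive both identities directly from the explicit formulas of Theorem~1, using only the evenness of the normal density and the reflection identity $F(-x)=1-F(x)=\widebar F(x)$.

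First I would recall from Theorem~1 that under the unanimous rejection rule, $M_{G_1\vee G_2}(\tilde d_1)=\mu\,F_2+(\mu\,F_1+\sigma_1 f_1)\widebar F_2$, while under the unanimous acceptance rule, $M_{G_1\wedge G_2}(\tilde d_1)=(\mu\,F_1+\sigma_1 f_1)F_2$. Since $\mu=0$, these collapse to
\begin{gather*}
M_{G_1\vee G_2}(\tilde d_1)=\sigma_1 f_1\,\widebar F_2,\qquad
M_{G_1\wedge G_2}(\tilde d_1)=\sigma_1 f_1\,F_2.
\end{gather*}
Here $\sigma_1$ and $f_1=f(-t_1/\sigma_1)$ depend only on $t_1$ (which is fixed), not on $t_2$, so they are the same constant in both expressions. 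The entire $t_2$-dependence sits in the factor $F_2=F(-t_2/\sigma_2)$ versus $\widebar F_2$. Replacing $t_2$ by $-t_2$ sends the argument $-t_2/\sigma_2$ to $t_2/\sigma_2$, hence $F_2\big|_{t_2=t}=F(-t/\sigma_2)=1-F(t/\sigma_2)=\widebar F_2\big|_{t_2=-t}$. Therefore $M_{G_1\vee G_2}(\tilde d_1\mid t_2=t)=\sigma_1 f_1\,\widebar F_2\big|_{t_2=t}=\sigma_1 f_1\,F_2\big|_{t_2=-t}=M_{G_1\wedge G_2}(\tilde d_1\mid t_2=-t)$, which is the first claim.

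For the second claim I would apply Theorem~1 to $\tilde d_2$ with $\mu=0$ and $t_1=0$. Under the rejection rule, $M_{G_1\vee G_2}(\tilde d_2)=\sigma_2 f_2\,\widebar F_1$ (this is equation~(\ref{eq11})), and under the acceptance rule, $M_{G_1\wedge G_2}(\tilde d_2)=\sigma_2 f_2\,F_1$ (equation~(\ref{eq12})). The factor $\sigma_2 f_2$ is common. The remaining factors are $\widebar F_1=\widebar F(-t_1/\sigma_1)$ and $F_1=F(-t_1/\sigma_1)$; at $t_1=0$ both equal $F(0)=\widebar F(0)=\tfrac12$, so the two expectations coincide for every $t_2=t$, giving the second claim.

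I do not anticipate a serious obstacle here: the only subtlety is bookkeeping the sign in the argument $\mu_i/\sigma_i=(\mu-t_i)/\sigma_i$ and making sure the reflection $F(-x)=\widebar F(x)$ is applied to the right variable — $t_2$ for the first identity, $t_1$ for the second. Everything else is substitution into the Theorem~1 formulas.
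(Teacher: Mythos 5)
Your proof is correct and follows essentially the same route as the paper: specialize the Theorem~1 formulas to $\mu=0$, use $F(-x)=1-F(x)$ to match $\widebar F_2$ at $t_2=t$ with $F_2$ at $t_2=-t$, and use $F(0)=\widebar F_1(0)=\tfrac12$ for the second identity. (Your bookkeeping of the argument as $-t/\sigma_2$ is in fact the correct normalization; the paper's displayed proof writes $\sigma_1$ there, which appears to be a typo.)
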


\begin{proof}
According to Theorem~1, for $\mu=0$ we have:
\begin{eqnarray*}
M_{G_1 \wedge G_2} (\tilde d_1\mid t_2=-t)
                                            &=&\sigma_1 f_1{\kern 1pt}F(t/\sigma_1),\\
M_{G_1 \vee   G_2} (\tilde d_1\mid t_2=t)
                                            &=&\sigma_1 f_1{\kern 1pt}(1-F(-t/\sigma_1))
                                             = \sigma_1 f_1{\kern 1pt}   F( t/\sigma_1)
=M_{G_1\wedge G_2}(\tilde d_1\mid t_2=-t).
\end{eqnarray*}

Moreover, $t_1=0$ implies that $M_{G_1\wedge G_2}(\tilde d_2\mid t_2=t)=\sigma_2{\kern 1pt}f(-t/\sigma_2)F(0)=M_{G_1 \vee G_2}(\tilde d_2\mid t_2=t)$. This completes the proof.~$\diamondsuit$
\end{proof}

\begin{remark}
Due to Corollary~2, $\mu=0$ implies that $M(\tilde d_2\mid t_2=t)=M(\tilde d_2\mid t_2=-t)$ for either voting rule under study. Consequently, if $\mu=0$ and $t_1=0,$ then every relationship under $(G_1\vee G_2)$-rule can be derived from the corresponding relationship obtained for the $(G_1\wedge G_2)$-rule by means of the $y$-axis reflection. Thus, in this case, the situations where voting is organized according to the $(G_1\wedge G_2)$-rule and according to the $(G_1\vee G_2)$-rule are, in a certain sense, dual.~$\diamondsuit$
\end{remark}

Note that under the $(G_1\vee G_2)$-rule, Group~1 benefits from excessive claims of Group~2 (i.e., from high threshold~$t_2$). In this case, if threshold $t_1$ is positive, but not so high, then most decisions are made for the benefit of Group~1. On the other hand, at a low threshold $t_2,$ Group~2 ensures implementation of almost all proposals, and so most proposals unfavorable to Group~1 pass through the ``riddle'' of voting.

Now let us consider the maximization problem for the Group~2 advantage $M(\tilde d_2-\tilde d_1)$ over Group~1 provided that the $(G_1\vee G_2)$-rule is adopted.

\begin{proposition}
Under the $(G_1\vee G_2)$-rule$,$ the expected advantage $M(\tilde d_2-\tilde d_1)$ of a Group~$2$ member over a Group~$1$ member attains its maximum value at the claim threshold $t_2$ defined by
\begin{gather}
\label{eq16}
t_2^+=\mu-\frac{\sigma_1 f_1}{\widebar F_1}.
\end{gather}
\end{proposition}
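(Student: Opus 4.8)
The plan is to mimic exactly the argument used for Proposition~1, but starting from the comparative formula~(\ref{eq9}) for the unanimous rejection rule rather than~(\ref{eq10}). First I would write down the objective: under the $(G_1\vee G_2)$-rule, by the Corollary, $M(\tilde d_1-\tilde d_2)=\sigma_1 f_1\widebar F_2-\sigma_2 f_2\widebar F_1$, so maximizing $M(\tilde d_2-\tilde d_1)$ over $t_2$ is the same as minimizing this expression over $t_2$. Note that $\sigma_1$, $f_1$, $\widebar F_1$ are all independent of $t_2$ (they depend only on $\mu_1=\mu-t_1$, $\sigma_1$), so only the terms $\widebar F_2=1-F((\mu-t_2)/\sigma_2)$ and $f_2=f((\mu-t_2)/\sigma_2)$ and the prefactor $\sigma_2$ carry the $t_2$-dependence.

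Next I would differentiate with respect to $t_2$. Writing $u=(\mu-t_2)/\sigma_2$ so that $du/dt_2=-\sigma_2^{-1}$, we have $d\widebar F_2/dt_2 = -f_2\cdot(-\sigma_2^{-1})=\sigma_2^{-1}f_2$ and $df_2/dt_2 = f'(u)\cdot(-\sigma_2^{-1})$. Since $f'(u)=-u\,f(u)$, this gives $df_2/dt_2 = (u/\sigma_2) f_2 = \frac{\mu-t_2}{\sigma_2^{2}}\,f_2$. Therefore
\begin{gather*}
\frac{d\,M(\tilde d_1-\tilde d_2)}{d\,t_2}
=\sigma_1 f_1\cdot\sigma_2^{-1}f_2
-\sigma_2\widebar F_1\cdot\frac{\mu-t_2}{\sigma_2^{2}}\,f_2
=\frac{f_2}{\sigma_2}\Bigl(\sigma_1 f_1-(\mu-t_2)\widebar F_1\Bigr).
\end{gather*}
Since $f_2>0$ and $\sigma_2>0$, setting this equal to zero forces $\sigma_1 f_1-(\mu-t_2)\widebar F_1=0$, i.e. $\mu-t_2 = \sigma_1 f_1/\widebar F_1$, which rearranges to the claimed value $t_2^+=\mu-\sigma_1 f_1/\widebar F_1$ in~(\ref{eq16}).

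Finally I would confirm this critical point is the desired extremum. Inspecting the sign of the bracketed factor $h(t_2):=\sigma_1 f_1-(\mu-t_2)\widebar F_1$: it is strictly increasing in $t_2$ (its derivative is $\widebar F_1>0$), negative for $t_2<t_2^+$ and positive for $t_2>t_2^+$. Hence $M(\tilde d_1-\tilde d_2)$ is decreasing then increasing, so $t_2^+$ is its unique global minimum, and therefore $M(\tilde d_2-\tilde d_1)$ attains its unique maximum there; one may phrase this via the second-derivative test as in Proposition~1 if a more parallel presentation is preferred. The main (mild) obstacle is bookkeeping the chain rule through $f$ and $\widebar F$ correctly — in particular remembering $f'(u)=-uf(u)$ and the sign from $du/dt_2=-\sigma_2^{-1}$; everything else is routine and mirrors the proof of Proposition~1.
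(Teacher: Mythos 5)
Your proof is correct and follows essentially the same route as the paper: differentiate the comparative increment $\sigma_1 f_1\widebar F_2-\sigma_2 f_2\widebar F_1$ from~(\ref{eq9}) with respect to $t_2$, set the derivative to zero, and solve for~$t_2$, arriving at~(\ref{eq16}). Your factorization of the derivative as $\frac{f_2}{\sigma_2}\bigl(\sigma_1 f_1-(\mu-t_2)\widebar F_1\bigr)$ and the subsequent sign analysis is a slightly cleaner justification than the paper's appeal to the second-derivative test, since it directly yields that the critical point is the unique \emph{global} minimum of $M(\tilde d_1-\tilde d_2)$.
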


\begin{proof}
Differentiating (\ref{eq9}) with respect to $t_2$ yields:
\begin{gather*}
\frac{d{\kern 1pt}M(\tilde d_1-\tilde d_2)}{d\,t_2}
=\frac{d\,(\sigma_1{\kern 1pt}f_1{\kern 1pt}\widebar F_2-\sigma_2{\kern 1pt}f_2{\kern 1pt}\widebar F_1)}{d\,t_2}
=\sigma_1 {\kern 1pt}f_1 f_2\!\cdot\!\bigl(\sigma_2^{-1}\bigr)-\sigma_2 f_2 \frac{-(t_2-\mu)}{\sigma_2^2}\widebar F_1.
\end{gather*}

Setting this derivative equal to zero one arrives at~(\ref{eq16}).
The second derivative test confirms that this unique extremum is a minimum.
Thus, (\ref{eq16}) provides a maximum value to the function $M(\tilde d_2-\tilde d_1)$.~$\diamondsuit$
\end{proof}

The maximum comparative gain $M(\tilde d_2-\tilde d_1)$ of Group~2 attained at the claim threshold $t_2^+$ tends to zero as $t_1\to-\infty$ (in the limit, all proposals are accepted by the votes of Group~1) and tends to its maximum value when $t_1 \to \infty$ (in the limit, Group~1 does not support any proposals).

As well as in the case of the $(G_1\wedge G_2)$-rule, consider the problem of maximizing, by choosing $t_2$, the capital of the whole society. The solution is provided by the following proposition.

\begin{proposition}
Suppose that the $(G_1\vee G_2)$-rule is adopted. Then the expected capital increment of the whole society attains its maximum value at the Group~$2$ claim threshold given by
\begin{gather}
\label{eq17}
t_2^0=-\frac{g_1}{g_2}\left({\mu-\frac{\sigma_1 f_1}{\widebar F_1}}\right).
\end{gather}
\end{proposition}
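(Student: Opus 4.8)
The plan is to repeat the argument of the proof of Proposition~2, but starting from formula~(\ref{eq1}) of Theorem~1 in place of~(\ref{eq2}). First I would write out the expected total capital increment of the society, $S(t_2)=g_1 M(\tilde d_1)+g_2 M(\tilde d_2)$, by substituting~(\ref{eq1}). Since $t_1$ (hence $F_1$, $f_1$ and $\widebar F_1$) is held fixed, it is convenient to set $A=\mu F_1+\sigma_1 f_1$ and $B=\widebar F_1$; then $M(\tilde d_1)=A+(\mu-A)F_2$ and $M(\tilde d_2)=\mu F_1+\mu B F_2+\sigma_2 B f_2$, so that
\[
S(t_2)=\bigl(g_1 A+g_2\mu F_1\bigr)+\bigl(g_1(\mu-A)+g_2\mu B\bigr)F_2+g_2\sigma_2 B f_2 .
\]

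Next I would differentiate with respect to $t_2$, using $dF_2/dt_2=-f_2/\sigma_2$ and $df_2/dt_2=(\mu-t_2)f_2/\sigma_2^2$ (the derivative rules already used implicitly in the proofs of Propositions~1--3). Collecting terms and pulling out the positive common factor $f_2/\sigma_2$ gives
\[
\frac{dS}{dt_2}=\frac{f_2}{\sigma_2}\bigl(-g_1\mu+g_1 A-g_2 B\,t_2\bigr),
\]
the key point being that the two contributions $\pm\,g_2\mu B$ arising from $F_2$ and from $f_2$ cancel, leaving a bracket that is \emph{linear} in $t_2$. Setting it to zero yields $g_2\widebar F_1\,t_2=g_1(A-\mu)=g_1(\sigma_1 f_1-\mu\widebar F_1)$, which rearranges to exactly~(\ref{eq17}).

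It remains to check that $t_2^0$ is a maximum. Since $f_2>0$ for all $t_2$ and the bracket $-g_1\mu+g_1 A-g_2\widebar F_1\,t_2$ is strictly decreasing in $t_2$ (slope $-g_2\widebar F_1<0$), the derivative $dS/dt_2$ changes sign from $+$ to $-$ precisely at $t_2^0$; equivalently, $d^2S/dt_2^2$ at $t_2^0$ equals $-(g_2\widebar F_1/\sigma_2)f_2<0$. Hence $t_2^0$ is the unique global maximizer. The computation is routine; the only point requiring care is the bookkeeping when assembling the $t_2$-dependent terms and verifying the cancellation of the $\mu\widebar F_1$ contributions — without it the stationarity condition would not reduce to a closed-form expression. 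One could alternatively try to transfer Proposition~2 to the present case via the duality of Corollary~2, but that shortcut is available only for $\mu=0$, so the direct computation is preferable.
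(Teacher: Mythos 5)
Your computation is correct and follows essentially the same route as the paper: substitute formula~(\ref{eq1}) into $g_1 M(\tilde d_1)+g_2 M(\tilde d_2)$, differentiate in $t_2$ using $dF_2/dt_2=-f_2/\sigma_2$ and $df_2/dt_2=(\mu-t_2)f_2/\sigma_2^2$, set the derivative to zero, and confirm the maximum via the second derivative (your sign-change argument for global maximality is a slightly more explicit version of the paper's terse ``second derivative test'' remark). The cancellation of the $g_2\mu\widebar F_1$ terms that you highlight is exactly what happens in the paper's rearranged stationarity condition, so there is nothing to add.
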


\begin{proof}
By Theorem~1 (see Eq.\,(\ref{eq1})) the expected capital increment of the whole society is
\begin{eqnarray*}
g_1 M(\tilde d_1)+g_2 M(\tilde d_2)
&=&g_1\!\cdot\!\left({\mu{\kern 1pt}F_2+(\mu{\kern 1pt}F_1+{\sigma}_1 f_1)\widebar F_2}\right)
 + g_2\!\cdot\!\left({\mu{\kern 1pt}F_1+(\mu{\kern 1pt}F_2+{\sigma}_2 f_2)\widebar F_1}\right)\\
&=&(g_1+g_2)\mu{\kern 1pt}F_1+g_1\sigma_1 f_1+g_2\sigma_2\widebar F_1 f_2
+\left(g_1\!\cdot\!(\mu{\kern 1pt}-\mu{\kern 1pt}F_1-\sigma_1 f_1)+g_2\mu{\kern 1pt}\widebar F_1\right)\!F_2.
\end{eqnarray*}

The derivative of this function w.\,r.\,t. $t_2$ is:
\begin{gather*}
\frac{d\,(g_1 M(\tilde d_1)+g_2 M(\tilde d_2))}{d\,t_2}
=g_2 \sigma_2 \widebar F_1 f_2\frac{-(t_2-\mu{\kern1pt})}{\sigma_2^2}
+\left({g_1\!\cdot\!(\mu{\kern 1pt}-\mu{\kern 1pt}F_1-\sigma_1 f_1)+g_2\mu{\kern 1pt}\widebar F_1}\right)\!f_2\!\cdot\!\bigl(-\sigma_2^{-1}\bigr).
\end{gather*}

Setting it equal to $0$ we have: $-g_2{\kern 1pt}\widebar F_1{\kern 1pt}(t_2^0-\mu)=g_1\!\cdot\!(\mu{\kern 1pt}\widebar F_1-\sigma_1 f_1)+g_2\mu{\kern 1pt}\widebar F_1,$ which leads to~(\ref{eq17}). Finally, using the second derivative test we verify that the point in question provides the maximum value.~$\diamondsuit$
\end{proof}

\begin{remark}
Comparing Propositions~3 and~4 we obtain the following (cf.\ Remark~\ref{r_3}). At $g_1=g_2$, the value $t_2^+$ of the claim threshold $t_2$ of Group~2 that leads to the maximum advantage over Group~1 and the value $t_2^0$ that maximizes the capital of the whole society are \textit{opposite.} In particular, this is the case for the example in Fig.\,3. Moreover, even if $g_1\ne g_2$, the thresholds $t_2^+$ and $t_2^0$ have different signs (or both are equal to zero), since, as well as in the case of the  $(G_1\wedge G_2)$-rule, ${t_2^0=-(g_1 /g_2)t_2^+}$.~$\diamondsuit$
\end{remark}

Now consider the issue of maximizing the capital increment of the whole society when the $(G_1\vee G_2)$-rule is employed. In a certain sense, the result is dual to the one derived for the $(G_1 \wedge G_2)$-rule. Namely, Proposition~4 implies that the claim thresholds of the groups, $t_1$ and $t_2$, that maximize the capital increment of the whole society satisfy the following system of equations:
\begin{gather*}
t_1=-\frac{g_2}{g_1}\left({\mu-\frac{\sigma_2 f_2}{\widebar F_2}}\right); \quad
t_2=-\frac{g_1}{g_2}\left({\mu-\frac{\sigma_1 f_1}{\widebar F_1}}\right),
\end{gather*}
where $\sigma_1,\sigma_2,f_1,f_2,F_1,$ and $F_2$ are defined in Theorem~1. In the simplest case $g_1=g_2,$ the above system of equations reduces to
\begin{gather*}
t_1=-\mu+\frac{\sigma_1 f_1}{\widebar F_1};\quad t_2=t_1.
\end{gather*}

The dependence on $\mu$ of the claim threshold $t$ of both groups that maximizes the capital of the society is illustrated (for the case of $g_1=g_2=300$ and $\sigma=10$) in Fig.\,4. In particular, for $g_1=g_2=\tilde g$ and $\mu=0,$ the corresponding threshold is $t=\sigma y_0\big/\sqrt{\tilde g},$ where $y_0\approx 0.506$ is the solution of the equation $y=f(y)/F(y)$, $f(y)$ and $F(y)$ being the density and distribution function of the standard normal distribution, respectively.
\begin{figure}[htbp]
\centerline{\includegraphics[scale=1.5]{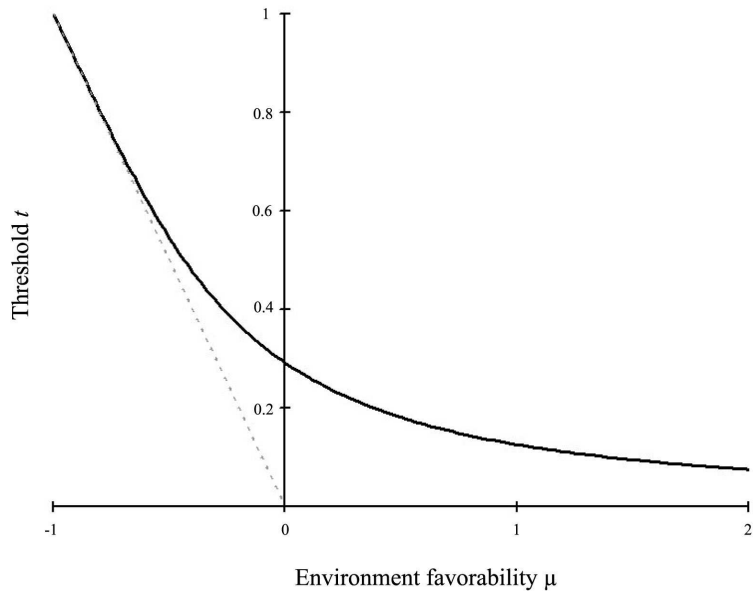}}
\label{fig4}
\caption{Relationship between the parameter $\mu$ and the claim threshold $t=t_{1}=t_{2}$ of both groups that maximizes the capital increment of the whole society in the case of $g_{1}=g_{2}=300$ and $\sigma=10$.}
\end{figure}

In contrast to voting based on the $(G_1\wedge G_2)$-rule, the claim threshold of both groups that maximizes the capital of the society under the $(G_1\vee G_2)$-rule is positive (if the voting rule is not conservative, then the voters have to be conservative); for $g_1=g_2$ it tends to zero from above as $\mu\to\infty$ and asymptotically approaches to the line $t=-\mu$ as $\mu\to-\infty$ (see Fig.\,4).

\section{Conclusion}

This paper has studied the social dynamics determined by voting of two cohesive groups in a stochastic environment. Two groups of similar size (a kind of a ``two-party system'') have been considered; this makes reasonable the application of the voting procedure which requires the support of both groups for the acceptance of any proposal (the unanimous acceptance group rule).

Another procedure in which the support of either group suffices for the acceptance of a proposal (the unanimous rejection group rule) looks a little less natural. However, using this procedure does not lead to an infinite sequence of contradictory decisions suggested by different groups since the proposals in the model are generated by the stochastic environment. %randomly.

Each group supports only those proposals that increase the average capital of its members by an amount exceeding the claim threshold set by the group. The relationship between social dynamics and the claim thresholds of the groups was the main subject of the present study.

In particular, we have derived analytical expressions for the capital increments of the groups in terms of their claim thresholds. It has been shown that under the unanimous acceptance group rule and the equal size of the groups, the claim threshold of a group that ensures the maximum advantage over the other group and the claim threshold that maximizes the capital of the whole society are \emph{opposite}. Moreover, the first above threshold is equal to the average capital increment of the competing group members over the proposals supported by the competing group. Furthermore, it turns out that the claim thresholds of the groups that maximize the capital of the whole society are negative, which can be interpreted as the profitability of a certain appetite for risk in the case of a conservative voting rule; the dependence of these thresholds on the environment favorability has been established. Similar (in fact, dual) results have been obtained for the unanimous rejection group rule.

It should be emphasized that the phenomena observed are not only inherent in the model under consideration; the most important ones have real prototypes widely discussed in the relevant literature.

%%%

\begin{thebibliography}{99}

\bibitem{ch1}  Borzenko, V.I., Lezina, Z.M., Loginov, A.K., Tsodikova, Ya.Yu., and Chebotarev, P.Yu., Strategies of Voting in Stochastic Environment: Egoism and Collectivism, \emph{Autom. Remote Control\/}, 2006, vol.~67, no.~2, pp.~311--328.

\bibitem{ch2} Chebotarev, P.Yu., Analytical Expression of the Expected Values of Capital at Voting in the Stochastic Environment, \emph{Autom. Remote Control\/}, 2006, vol.~67, no.~3, pp.~480--492.

\bibitem{ch3} Chebotarev, P.Yu., Loginov, A.K., Tsodikova, Ya.Yu., Lezina, Z.M., and Borzenko, V.I., Analysis of Collectivism and Egoism Phenomena within the Context of Social Welfare, \emph{Autom. Remote Control\/}, 2010, vol.~71, no.~6, pp.~1196--1207.

\bibitem{ch4} Anderson, S.P., Kats, A., and Thisse, J.-F., Probabilistic Voting and Platform Selection in Multi-Party Elections, \emph{Soc. Choice Welfare\/}, 1994, vol.~11, no.~4, pp.~305--322.

\bibitem{ch5} Suijs, J., Cooperative Decision Making in a Stochastic Environment, \emph{PhD Dissertation\/}, Tilburg: Tilburg Univ., 1998.

\bibitem{ch6} Aiyagari, S.R. and Peled, D., Social Insurance and Taxation under Sequential Majority Voting and Utilitarian Regimes, \emph{J. Econom. Dynam. Control\/}, 1995, vol.~18, no.~8, pp.~1511--1528.

\bibitem{ch7} Zakharov, A.V., The Models of Political Competition: Literature Review, \emph{Ekonom. Mat. Metody\/}, 2009, vol.~45, no.~1, pp.~110--128.

\end{thebibliography}
\end{document}